\def\L{{\cal L}}
\newcommand{\E}{{\cal E}}
\newcommand{\V}{{\cal V}}
\newcommand{\bbR}{{\mathbb R}}
\newtheorem{remark}{Remark}[section]
\newtheorem{proposition}{Proposition}[section]
\begin{document}
\title{Online detection of cascading change-points} 


\author{%
  Rui Zhang, Yao Xie, Rui Yao, Feng Qiu
}


\maketitle

\begin{abstract}
We propose an online detection procedure for cascading failures in the network from sequential data, which can be modeled as multiple correlated change-points happening during a short period. We consider a temporal diffusion network model to capture the temporal dynamic structure of multiple change-points and develop a sequential Shewhart procedure based on the generalized likelihood ratio statistics based on the diffusion network model assuming unknown post-change distribution parameters. We also tackle the computational complexity posed by the unknown propagation. Numerical experiments demonstrate the good performance for detecting cascade failures. 
\end{abstract}


\section{Introduction}
\label{sec:intro}

Cascading failure is a critical problem for power systems. A power system consists of a large number of buses connected by lines, so a failure or anomaly in one component weakens the whole system, makes other components more vulnerable and increases their risk of malfunctioning. Cascading failure is the process caused by the initial failure of one component, which propagates to cause the consecutive failure of other components \cite{7254205}. Cascading failures can often lead to major blackouts, and it is essential to deploy fast detection and effective mitigation.

Online detection of cascading failure is a challenging task.  In some cases, we cannot directly monitor a component's failure but rather use indirect measurements to detect the system's changes and infer the failure. For instance,  \cite{wiltshire2007kalman} uses the difference between the true and the estimated voltage angle, \cite{dobson2010new} uses the area angle to monitor the outage in the power network, and \cite{hines2011estimating} uses auto-correlation in the frequency signal increases as the system is getting close to a critical slowdown and applies this property to detect a failure before it occurs. These studies demonstrate the value of measurements in system monitoring and event detection, but there has been little work on using measurements for the online detection of failures.  

A path forward for modeling cascading failure is that the propagation of failures in a power system can be modeled as diffusion networks \cite{bigdatabook}, which is popular for transmission disease modeling \cite{meirom2017detecting} and
information diffusion in social networks \cite{yang2013mixture}. Prior work \cite{dobson2012estimating,ren2008using} also studied estimating the latent network using multiple cascading cascade events power systems. To perform online detection, we assume that the underlying network and the propagation model are given. But since the propagation is typically unknown since they can be arbitrary and due to anomaly, we have to infer the propagation of the network's failure using real-time measurements.

This paper develops an online change-point detection procedure for power system's cascading failure using multi-dimensional measurements over the networks. We incorporate the cascading failure's characteristic into the detection procedure and model multiple changes caused by cascading failures using a diffusion process over networks \cite{gomezuncovering}. The model captures the property that the risk of component failing increases as more components around it fail. Our change-point detection procedure using the generalized likelihood ratio statistics assuming unknown post-change parameters of the measurements and the true failure time (change-points) at each node.

Closely related works in the literature include the following. Since a failure in the power system can be modeled as a change-point when the distribution of measurements changes, \cite{chen2019compound} examines a general setting with multiple change-points in multiple data streams without utilizing the graph topology of the data streams. Several works consider the propagation of change-points:  \cite{raghavan2010quickest} considers change-point propagation in a line-type network,  \cite{kurt2018multi} assumes that all possible propagation paths are equally likely, both of which proposed a propagation model only depends on the last change-points. However, multiple failure nodes may influence the neighboring nodes, such as outages in power systems; \cite{rigatos2009neural} applies fuzzy neural networks to detect changes in measurements and identify anomaly types. Recent seminal work \cite{zou2018quickest} studies a general change-point detection framework while considering the event propagation dynamics is unknown; in contrast, we consider a specific diffusion model for cascading failures motivated by power systems applications.

\section{Problem Setup} \label{sec:ProSetup} 

Consider a graph $\mathcal G = (\V, \E)$, which is 
formed by a set of nodes $\V=\{1,2,3,\dots, N\}$ and a set of edges $\E\subset \V\times \V$. Here $\V$ corresponds to the set of components in the power network, and $\E$ can be constructed according to the physical network or interaction graph \cite{hines2013dual,qi2014interaction}. Assume $\mathcal G$ is undirected; $X_{i,t}\in\bbR$ is the measurement of $i$th node at time $t$, $t=1,\dots, T$. 

We make the following assumptions about the change points, which correspond to the failure times at each node. Assume the true failure time of the $i$th node is $\tau_i^*\in\bbR^+\cup\{\infty\}$, and $\tau^*_{(i)}$ denotes the corresponding ordered failure time. When $\tau_i^* = \infty$, it means that there is no failure on the $i$th node. Let $\tau^* = (\tau_1^*, \tau_2^*,\dots, \tau_N^*)$ denote the vector of all true failure times, which is unknown.

\subsection{Failure (change-point) propagation model}

Consider the following cascading model for the propagation process of the network's failures. We assume that whenever a failure occurs on a node, it increases the neighboring nodes' tendency to fail. Mathematically, we define the influence of node $i$  on node $j$  as $\alpha_{i,j} >0$. We assume $\alpha_{j,i}$ are known since they can be typically estimated beforehand using historical and simulation data given on the topology of the power grid and power flow. We do not know the distribution for the first failure. After the occurrence of the first failure, the distribution of the subsequent failures is determined by the conditional hazard rate (intensity function) $\lambda_i(t)$:
\begin{equation}\label{hr}
	\lambda_i(t) = \left\{ \begin{array}{ll}
	\sum_{j:(j,i)\in\E, \tau_j^*<t} \alpha_{j,i}, & \tau_{(1)}^*<t\leq \tau_i^*,\\
	0, & \hbox{o.w.}
	\end{array}\right.
\end{equation}
We assume that the failed nodes can only affect the neighboring nodes in the graph. The influence of the failed nodes is constant over time. Therefore, each node's hazard rate before failure is a piece-wise constant, starting at 0 and jumps when the failure affects its neighboring nodes. Figure \ref{diffusion} shows an example of the failure propagation process.

\begin{remark}
	Define the history (filtration) up to time $t$ as \[\mathcal H(t) = \{\tau_i^*\leq t, i =1,\dots,N\}.\] 
	Then, given the $\mathcal H(t)$, the conditional intensity function of the $i$th node is defined as:
	\begin{equation*}
	\lambda_i(t) \triangleq \lim_{\Delta t\rightarrow 0}  \frac{\mathbb P\{\tau_i^*\in[t+\Delta t]|\mathcal H(t), \tau_i^*>t\}}{\Delta t}.
	\end{equation*} The distribution of $\tau^*$ is uniquely defined by the conditional hazard rate \cite{gomezuncovering}.
\end{remark}
\begin{figure}
	\centering
	\includegraphics[width= \linewidth]{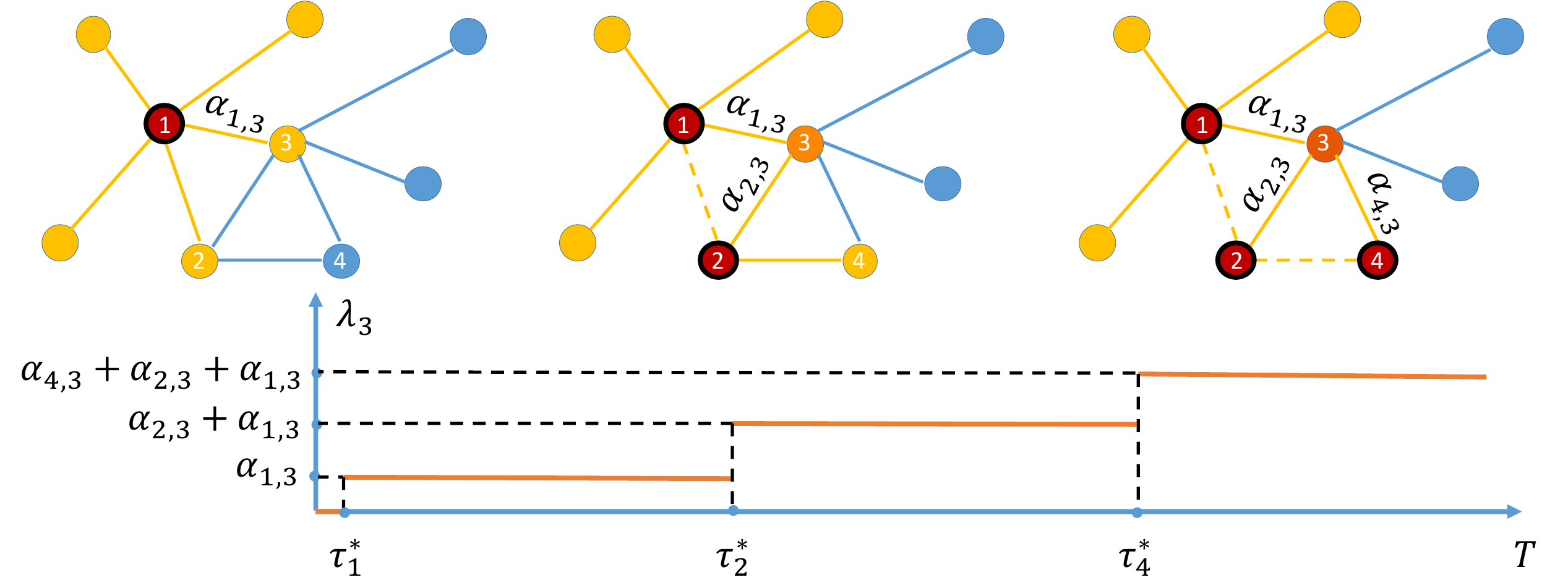}
	\caption{Example of how cascading failure propagates over networks. The failure initiate at node one, then all the neighbors of node one are affected, and node two and node four fail eventually. As the failure propagates, node three is surrounded by more and more failed nodes, and its hazard rate continues to increase. Here, red circles correspond to failed nodes, solid yellow lines are possible paths for failures to diffusion, dashed yellow lines correspond to paths with failed nodes at both ends, yellow circles are nodes affected by failed neighbors.}
	\label{diffusion}
\end{figure}

\subsection{Measurement model}

To simplify the study, we assume that the measurements at each node are independent, conditioned on the failure time. Before a change, they follow an $i.i.d.$ standard normal distribution, and 
after a change they follow an $i.i.d.$ normal distribution with an unknown mean and variance. That is,
\begin{equation}\label{mmodel}
	X_{i,t}\stackrel{\rm i.i.d}{\sim}
	\begin{cases}
		 \,\,\mathcal N(0,1), \,\, &~~ t<\tau_i^*,\\
		 \,\,\mathcal N(\mu_i, \sigma_i^2),\,\, &~~ t\geq\tau_i^*.
	\end{cases}
\end{equation}
Since we can typically use a certain length of data as a warm start to estimate the sample mean and variance of the pre-change distribution, therefore we assume that the pre-change distribution is known and can be standardized.

\subsection{Likelihood function} 

According to the above models, in a time window $[0,T]$, given measurements and failure times, proposition \ref{prop:llh} is the likelihood function.
\begin{proposition}\label{prop:llh}
According to the model defined by Equation (\ref{hr}) and Equation (\ref{mmodel}), the likelihood function for a given $\tau^*$ and $X_{i,t}$ in $[0,T]$ can be expressed as the following:
\begin{align}
	\label{likelihood}
	&f(\tau_i^*, X_{i,t}, \forall i=1,\dots, N, t = 1,\dots,T) \nonumber\\
=& \underbrace{\prod_{i=2}^N f(\tau_{(i)}^*|\tau_{(1)}^*\cdots\tau_{(i-1)}^*)}_{(a)}
\cdot\underbrace{\prod_{i=1}^N\prod_{t=1}^T f(X_{i,t}|\tau_i^*)}_{(b)}
\end{align}
where the term (a) captures the failure propagation model and term (b) captures the measurement model.
\end{proposition}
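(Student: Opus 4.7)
The plan is to decompose the joint density of failure times and measurements into two pieces via conditional independence, and then further factor the failure-time density by the chain rule over ordered failures.

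First, by the definition of conditional probability, I would write the joint density as
$$
f(\tau^*, X) \;=\; f(\tau^*) \cdot f(X \mid \tau^*).
$$
For $f(X \mid \tau^*)$, I would invoke the assumption in Equation (\ref{mmodel}) that, conditional on the failure-time vector $\tau^*$, the measurements $\{X_{i,t}\}$ are mutually independent across both $i$ and $t$, with $X_{i,t}\sim\mathcal N(0,1)$ for $t<\tau_i^*$ and $X_{i,t}\sim\mathcal N(\mu_i,\sigma_i^2)$ for $t\geq \tau_i^*$. This immediately yields term (b):
$$
f(X \mid \tau^*) \;=\; \prod_{i=1}^N \prod_{t=1}^T f(X_{i,t}\mid \tau_i^*).
$$

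Second, for $f(\tau^*)$ I would reorder the failure times as $\tau^*_{(1)}\leq \cdots \leq \tau^*_{(N)}$ and apply the chain rule on the ordered statistics,
$$
f(\tau^*) \;=\; f(\tau^*_{(1)}) \prod_{i=2}^N f\bigl(\tau^*_{(i)}\mid \tau^*_{(1)},\ldots,\tau^*_{(i-1)}\bigr).
$$
Because the paper explicitly states that the distribution of the first failure is unknown, the product in term (a) of Equation (\ref{likelihood}) is to be read as the conditional likelihood given $\tau^*_{(1)}$; this absorbs $f(\tau^*_{(1)})$ into a multiplicative constant that does not depend on the unknown parameters of interest. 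Each remaining factor $f(\tau^*_{(i)}\mid \tau^*_{(1)},\ldots,\tau^*_{(i-1)})$ is then determined by the hazard-rate specification in Equation (\ref{hr}): by Remark 1 and the standard correspondence between conditional hazard rates and conditional densities for temporal point processes, the density equals $\lambda_i(t)\exp\bigl(-\int_{\tau^*_{(1)}}^{t} \lambda_i(s)\,ds\bigr)$, which is well-defined once the previously failed nodes are fixed because $\lambda_i(t)$ depends only on failures strictly before time $t$.

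The main obstacle is making the second step fully rigorous: one must check that $\lambda_i(t)$ in Equation (\ref{hr}) depends only on $\{\tau^*_j : \tau^*_j < t\}$, so it is $\mathcal H(t)$-predictable, and then invoke the standard result (cited via \cite{gomezuncovering}) that a predictable hazard rate uniquely determines the conditional distribution of the corresponding event time. Once this point-process fact is in hand, combining the two factorizations yields exactly Equation (\ref{likelihood}).
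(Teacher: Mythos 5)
Your proposal is correct and follows essentially the same route as the paper: factor the joint density as $f(\tau^*)\,f(X\mid\tau^*)$, use the conditional independence of the measurements given $\tau^*$ for term (b), and use the hazard-rate/survival-function correspondence from \cite{gomezuncovering} together with the observation that $\lambda_i(t)$ depends only on failures before time $t$ to obtain the ordered chain-rule factorization in term (a). You are in fact slightly more explicit than the paper on one point---the paper's Equation (\ref{eq:prop1-2}) silently drops the factor $f(\tau^*_{(1)})$, whereas you note that the first-failure distribution is unspecified and is absorbed as a constant---while the paper is slightly more explicit than you on the complementary case $\tau_i^*>T$, where the factor is the pure survival term $\exp\bigl(-\int_0^T\lambda_i(t)\,dt\bigr)$.
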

\noindent
{\it Proof.}
According to model (\ref{mmodel}), given change-points $\tau^*$, $X_{i,t}$ are independent. i.e.
\begin{eqnarray}\label{eq:prop1-1}
   f(X_{i,t}, i = 1,\dots, N, t = 1,\dots, T|\tau^*) = \prod_{i=1}^N\prod_{t=1}^T f(X_{i,t}|\tau^*_i)
\end{eqnarray}
According to \cite{gomezuncovering}, the likelihood of a failure nodes is the product of the survival probability up to the failure time and the  hazard rate at the failure time, i.e. for $\tau^*_i<T$
\begin{eqnarray}\label{eq:sur1}
    f(\tau^*_i|\{\tau^*_1, \dots, \tau^*_N\}\setminus\tau^*_i) = \lambda_i(\tau^*_i)\exp\Big(-\int_{0}^{\tau^*_i}\lambda_i(t)dt\Big).
\end{eqnarray}
For a node which has no failure before time $T$, the likelihood of it is the survival function up to time $T$, i.e. for $\tau^*_i>T$:
\begin{eqnarray}\label{eq:sur2}
    f(\tau^*_i|\{\tau^*_1, \dots, \tau^*_N\}\setminus\tau^*_i) = \exp\Big(-\int_{0}^{T}\lambda_i(t)dt\Big).
\end{eqnarray}
According to the definition of hazard rate (\ref{hr}),  $\lambda_i(t)$ only depends on the change-points before time $t$. Therefore, we have 
\begin{equation}\label{eq:prop1-2}
    f(\tau^*_1,\dots, \tau^*_N) = \prod_{i=2}^N f(\tau^*_{(i)}|\tau^*_{(1)}\dots, \tau^*_{(i-1)}).
\end{equation}
Combine the above and (\ref{eq:prop1-1}), 
we have the following:
\begin{equation}
\begin{split}
         &f(\tau_i^*, X_{i,t}, i=1,\dots, N, t = 1,\dots,T)\\
         &=f(\tau^*_1, \dots, \tau^*_N)f(X_{1,1},\ldots,X_{N,T}|\tau^*_1, \tau^*_2, \dots, \tau^*_N)\\
         &=\prod_{i=2}^N f(\tau^*_{(i)}|\tau^*_{(1)}\dots, \tau^*_{(i-1)})\prod_{i=1}^N\prod_{t=1}^T f(X_{i,t}|\tau^*_i). ~~~~~~~\square
         \end{split}
\end{equation}

Our method can be extended in a more general setting. For the failure propagation model, one can choose different hazard functions. Three models are provided in \cite{gomezuncovering}. In our study, we choose to use the exponential model. For the measurement model, one can select the different distribution, and the measurement can be a high dimension. 

\section{Detection Procedure}\label{sec:testStat}

Consider the following sequential hypothesis test for detecting a dynamic change. An alarm is raised when there are at least $\eta$ change-points. To perform the online detection, at each time instance $T$ we consider the following hypothesis test:
\[
    H_{0,\eta, T}: \tau_{(\eta)}^* >T,\,\,\,
	H_{1,\eta, T}: 0\leq \tau_{(\eta)}^* \leq T.
\]
We consider a Shewhart chart type procedure:  at each time, we evaluate a general likelihood ratio (GLR) statistics over a sliding window. The GLR statistic can handle the mean, and post-change distribution variance are unknown. As shown in (\ref{likelihood}), given the failure time and the measurements, the likelihood can be decoupled into two parts: the likelihood of the failure propagation model and the likelihood of measurement model, respectively. 

\subsection{Log likelihood of failure propagation model}  

Define $\mathcal C(i) = \{j\in\V |(j,i)\in\E\}$ to be the set of the $i$th node's neighbors. Given $\tau$, the log-likelihood function for $[0,T]$  is shown in proposition \ref{prop:llh-f}:
\begin{proposition}\label{prop:llh-f}
Given $T$, a set of failure times $\tau = (\tau_1,\dots, \tau_N)$, graph $\mathcal G$, and parameters $\alpha_{i,j}$ $\forall i, j \in \V$, the log-likelihood function of the failure propagation model is given by
\begin{equation}
\begin{split}
	&\ell_{1,T} = \log f(\tau_1,\tau_2,\dots, \tau_N|\{\alpha_{i,j}\}) \label{llh_f}\\
	& = \sum_{\substack{i:\tau_i\leq T,\\ \tau_i\neq \tau_{(1)}}}\Big\{ \log\left(\sum_{j\in \mathcal C (i)}\alpha_{j,i}\mathbb I(\tau_j<\tau_i)\right) \\
	&- \sum_{j\in\mathcal C(i)}\alpha_{j,i}(\tau_i-\tau_j)^+\Big\} - \sum_{i:\tau_i>T}\sum_{j\in\mathcal C(i)}\alpha_{j,i}(T-\tau_j)^+, 
	\end{split}
\end{equation}
where $(\cdot)^+ = \max(\cdot,0)$, and $\mathbb I(\cdot)$ is the indicator function.
\end{proposition}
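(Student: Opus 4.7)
The plan is to chain Proposition~\ref{prop:llh} with the survival-function identities (\ref{eq:sur1})--(\ref{eq:sur2}), then substitute the explicit piecewise-constant hazard rate (\ref{hr}) and evaluate the resulting integrals in closed form. Since Proposition~\ref{prop:llh} already reduces the joint density to the product of a propagation term (a) and a measurement term (b), only factor (a) is relevant here.

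First, I would start from $f(\tau_1,\ldots,\tau_N)=\prod_{i=2}^N f(\tau_{(i)}^*\mid \tau_{(1)}^*,\ldots,\tau_{(i-1)}^*)$. For each factor I apply (\ref{eq:sur1}) when $\tau_{(i)}\leq T$ and (\ref{eq:sur2}) when $\tau_{(i)}>T$, yielding a hazard-rate factor $\lambda_i(\tau_i)$ together with the survival factor $\exp(-\int_0^{\tau_i}\lambda_i(t)\,dt)$ in the first case, and only the survival factor $\exp(-\int_0^T\lambda_i(t)\,dt)$ in the second. Re-indexing from order statistics to the original node labels, I would note that the first failing node $\tau_{(1)}$ contributes nothing nontrivial: by (\ref{hr}), $\lambda_{(1)}(t)=0$ for all $t\leq \tau_{(1)}$, so its survival factor is $1$ and its hazard at the failure time is undefined. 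This is exactly why the statement restricts the first sum to $\tau_i\neq \tau_{(1)}$.

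Next, I would substitute the explicit form of $\lambda_i$ from (\ref{hr}). For $i$ with $\tau_i\leq T$ and $\tau_i\neq\tau_{(1)}$, the instantaneous hazard at $\tau_i$ is directly $\sum_{j\in\mathcal{C}(i)}\alpha_{j,i}\mathbb{I}(\tau_j<\tau_i)$, which supplies the logarithmic term in (\ref{llh_f}). For the integral, the crucial observation is that $\lambda_i$ is piecewise constant: writing $\lambda_i(t)=\sum_{j\in\mathcal{C}(i)}\alpha_{j,i}\mathbb{I}(\tau_j<t\leq\tau_i)$ and interchanging the (finite) sum with the integral yields
\[
\int_0^{\tau_i}\lambda_i(t)\,dt=\sum_{j\in\mathcal{C}(i)}\alpha_{j,i}\int_0^{\tau_i}\mathbb{I}(\tau_j<t)\,dt=\sum_{j\in\mathcal{C}(i)}\alpha_{j,i}(\tau_i-\tau_j)^+,
\]
where the positive-part correctly zeroes out neighbors with $\tau_j\geq\tau_i$, including non-failing neighbors under the convention $\tau_j=\infty$. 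The identical argument with upper limit $T$ gives $\int_0^T\lambda_i(t)\,dt=\sum_{j\in\mathcal{C}(i)}\alpha_{j,i}(T-\tau_j)^+$ for nodes that have not failed by $T$.

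Finally, taking logarithms converts the product over $i$ into a sum and turns each $\exp(-\cdot)$ factor into its linear exponent, yielding exactly (\ref{llh_f}). The main potential pitfall is careful bookkeeping in the integral step and the edge cases around it: neighbors that fail after $i$ or not at all must vanish (handled uniformly by $(\cdot)^+$), the first failing node must be excluded from the hazard-rate sum, and the partition of nodes into $\{\tau_i\leq T\}$ and $\{\tau_i>T\}$ must be kept disjoint so that each contributes its correct survival term. Once these are verified, the formula follows by direct substitution.
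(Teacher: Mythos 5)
Your proposal is correct and follows essentially the same route as the paper: combine the ordered-factorization from Proposition~\ref{prop:llh} with the survival identities (\ref{eq:sur1})--(\ref{eq:sur2}), substitute the piecewise-constant hazard (\ref{hr}), evaluate the integrals as $\sum_{j}\alpha_{j,i}(\cdot-\tau_j)^+$, and take logarithms. Your explicit handling of the edge cases (the first failing node contributing a trivial factor, and the $(\cdot)^+$ absorbing neighbors that fail later or never) is a welcome addition that the paper's proof leaves implicit.
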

\begin{proof}
Combine the model (\ref{hr}), equations (\ref{eq:sur1}) and (\ref{eq:sur2}), the log likelihood of a failure node is, $\forall \tau_i<T$,
\begin{equation}\label{eq:prop2-1}
\begin{split}
&\log f(\tau_i|\{\tau_1, \dots, \tau_N\} \setminus\tau_i)
    = \log\lambda_i(\tau_i) -\int_{0}^{\tau_i}\lambda_i(t)dt\\
    &=\log\Big(\sum_{j\in\mathcal C(i), \tau_j<\tau_i}\alpha_{j,i}\Big) -\sum_{j\in\mathcal C(i), \tau_j<\tau_i}\alpha_{j,i}(\tau_i - \tau_j).
\end{split}    
\end{equation}
The log-likelihood function of a node without failure before time $T$ is, $\forall \tau_i>T$, can be written as
\begin{equation}
\label{eq:prop2-2}
\begin{split}
    &\log f(\tau_i|\{\tau_1, \dots, \tau_N\}\setminus\tau_i) 
    = -\int_{0}^{T}\lambda_i(t)dt\\
    &= - \sum_{j\in\mathcal C(i), \tau_j<T}\alpha_{j,i}(T - \tau_j).
\end{split}
\end{equation}
Combine with equations (\ref{eq:prop1-2}), (\ref{eq:prop2-1}) and (\ref{eq:prop2-2}), we can derive the proposition.
\end{proof}


\subsection{Log likelihood of measurement model} 

Since we assume that the mean and variance of post-change distribution are unknown, we estimate the $\mu_i$s and $\sigma_i$s by maximum likelihood estimation (MLE): $\hat\mu_i$, $\hat\sigma_i$. Therefore the log-likelihood function of measurements of the $i$th node, given $\tau_i$, is:
\begin{align}\label{llh_m}
	\ell_{2,i,T} =&\log f(X_{i,t}, t = 1,\dots, T|\tau_i)\nonumber\\
	=&
	-\sum_{t=1}^{{ T\wedge(\tau_i-1)}}\frac{X_{i,t}^2}{2} - \sum_{t=\tau_i}^T\frac{(X_{i,t} - \hat\mu_i)^2}{2{\hat\sigma^2_i}}\nonumber\\
	&-\frac{T}{2}\log(2\pi)
	-(T-\tau_i+1)^+\log(\hat\sigma_i).
\end{align}
Since we assume that the distribution of measurements at each node is independent given $\tau$, the log likelihood function of all measurements is the summation of the log likelihood function of each node, i.e., $\ell_{2,T} = \sum_{i=1}^N\ell_{2,i,T}$.
%
Therefore, given the failure time $\tau$ and measurements $X_{i,t}$s, the log likelihood at time $T$ is:
\begin{equation}\label{llh}
	\ell_T(\tau, X_{i,t}\, i=1,\dots, N, t=1,\dots, T)
	=\ell_{1,T} +\ell_{2,T}.
\end{equation}Notice that if $\tau_i>T$ for all $i$, Equation (\ref{llh_f}) equals 0 and $\ell_T$ is the sum of log likelihoods of standard normal distribution for the measurements on each node, according to Equation (\ref{llh_m}). 

To perform the hypothesis test between $H_{0,\eta, T}$ and $H_{1,\eta, T}$, we need to search for $\tau$ such that the log-likelihood in (\ref{llh}) is maximized. Define 
\[U(\eta)=\{\tau:\sum_{i=1}^N \mathbb I(\tau_i\leq T)\geq \eta\},\] and 
\[L(\eta)=\{\tau:\sum_{i=1}^N \mathbb I(\tau_i\leq T)\leq \eta-1\}.\] 
We consider a Shewhart type of detection procedure, and we evaluate the test statistics with the data over a sliding window $[T-L+1, T]$, where $L$ is the length of the window. To detect a change for at least $\eta$ change-points, we apply the following GLR test statistics $\forall\eta = 1,\dots, N$:
\begin{align*}
	S_{\eta,T} = \max_{\tau\in U(\eta)}\ell_T(\tau) - \max_{\tau\in L(\eta)} \ell_T(\tau).
\end{align*}
The corresponding stopping time is 
\[\Gamma = \inf\{T>0: S_{\eta,T}> b\},\] 
for some preset threshold $b$.

\section{Computationally Efficient Algorithm}\label{sec:alg}

We would like to implement change detection online and detect the cascading changes as quickly as possible in practice. Thus, we need a low-complexity algorithm and only search for propagation paths with at most $m$ nodes affected by the failure. The computation cost of the maximum likelihood under the alternative hypothesis is high. For instance, for a fully connected graph with $N$ nodes with observations in time horizon $T$, the computation cost is $\mathcal O(T^mN!/((N-m)!m!)$. We aim to develop a computationally efficient algorithm based on a pruning strategy (similar to the ideas in \cite{rigaill2015pruned, padilla2019optimal}). Our proposed algorithm is described as in Algorithm \ref{alg}, which we describe below in more details. 

To reduce the computation cost, we propose a {\it random sampling strategy} as follows. Since the number of possible propagation paths in a fully connected network grows exponentially as the number of nodes increases. Define  $\mathcal F$  as the failure set that contains the failed nodes, and 
\[\mathcal R = \{j\notin\mathcal F: \exists i\in \mathcal F, \alpha_{i,j}>0\},\] as the risk set. Then, we generate the next possible failure points by randomly picking $P$ points in $\mathcal R$ without replacement with probability vector $\mathbf p = (p_i)_{i\in\mathcal R}$, 
where 
\begin{equation}
\label{eq:sampPro}
    p_i = \tilde p_i(\sum_{j\in\mathcal R} \tilde p_j)^{-1}, \quad \tilde p_i = \sum_{j\in\mathcal F} \alpha_{j,i}.
\end{equation}
 With this scheme, we reduce the number of paths to $O(NP^m)$.

To further reduce computational complexity, we also combine the above with a \texttt{Thinning} algorithm by exploiting the monotonicity of $e^{-x}$ involved in the likelihood function, which is summarize in Algorithm \ref{thinning}. 
Consider the likelihood 
\[\L_T =e^{ \ell_{T}}, \L_{2,i,T} = e^{\ell_{2,i,T}}, \L_{2,T} = e^{\ell_{2,T}}, \L_{1,T} = e^{\ell_{1,T}}.\] 
Given a propagation path, we need to compute the  maximum $\L_T(\tau)$, which is the product of $\L_{1,T}$ and $\L_{2,T}$. Define the $q$th percentile of the $i$th node to be $l_{2,i,q}$. Also define a lower bound $l_1$ for $\L_{1,T}$. Instead of maximizing $\L_T(\tau)$ over all the possible choices, we maximize it only in a thinned set 
\[\{\tau:\L_{2,i,T}(\tau_i)\ge l_{2,i,q},\,\forall i = 1,\dots,N\}\cap\{\tau:\L_{1,T}(\tau)\geq l_1\}.\] Specifically, given $\tau_{i}$, we consider 
\[\tau_{j} \in \{\tau_{i}+1, \tau_{i}+2\dots\} \cup \{\tau_{j}:\L_{2,j, T}(\tau_j)\geq l_{2,j,q}\},\] from the smallest to the largest until $\L_{1,T}< l_1$ as shown in Figure \ref{alg1}. The computation cost of this step is $O(h)$, where $h$ depends on the topology of $\mathcal G$, as well as parameters $\alpha_{i,j}$, $l_1$ and $q$. Moreover, we can show that $h\leq [L(1-q)]^m$.

With the above strategies, we can reduce the computation cost to $O(NP^mh)$, which is linear to $N$, the network's size. As shown in the following numerical examples, we can now efficiently compute the test statistics for a 300-bus power system. We combine \textit{random sampling strategy} and \textit{thinning} to reduce the computation cost using a recursion function \texttt{genNext} as shown in Algorithm \ref{alg:genNext}.
\begin{figure}
	\centering
	\includegraphics[width = 0.7\linewidth]{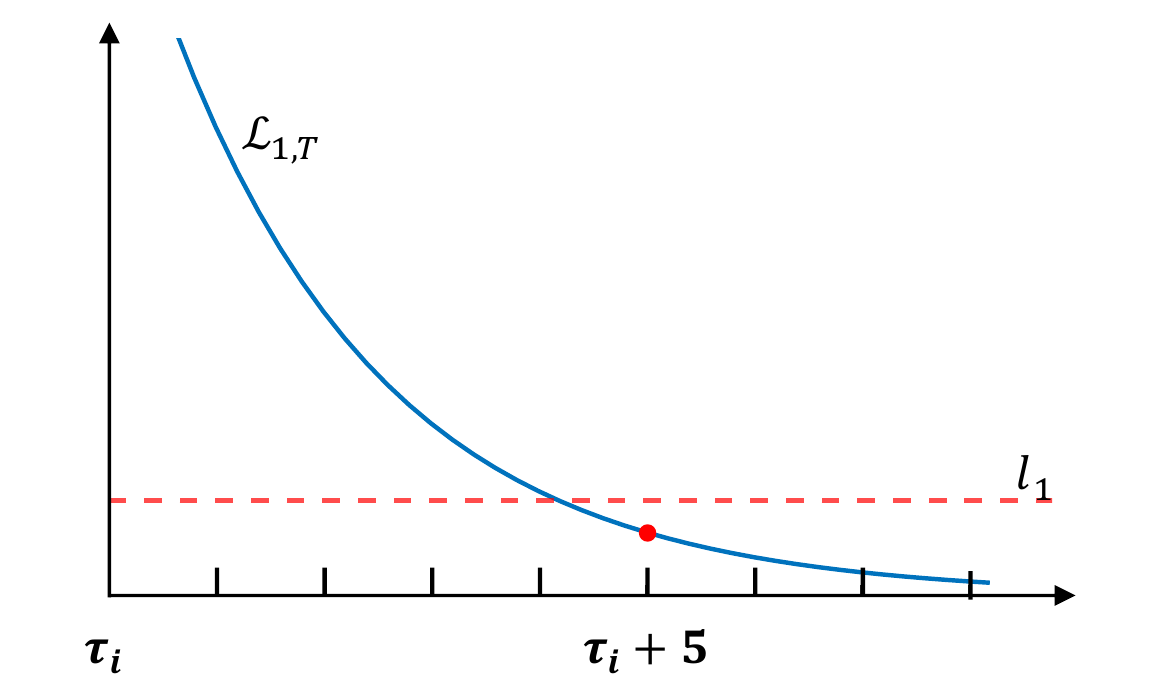}
	\caption{Illustration of the searching strategy for $\tau_j$ given the previous failure point $\tau_i$. When searching $\tau_j$ 
	from $\tau_i+1$, we have $\L_{1,T}(\tau_i+4)>l_1$ and $\L_{1,T}(\tau_i+5)<l_1$; by the monotonicity of $e^{-x}$, we can stop searching at $\tau_i+5$.
}
 
	\label{alg1}
\end{figure}
\begin{figure}
\resizebox{0.88\columnwidth}{!}{
\begin{algorithm}[H]
\SetAlgoLined
{\bf Input}: Data $X_{i,T-L+1}, \ldots X_{i,T}, i = 1, \ldots N$\\
Variables: 
\begin{itemize}
    \item $m$: the maximal number of change-points 
    \item $\ell$: log likelihood given a set of change-points 
    \item $\tau$: a set of change-points
    \item $r$: path of change-points (sorted)
    \item $\ell_{\rm max}$, $\tau_{\rm max}$, $r_{\rm max}$: best log likelihood and parameters
    \item $J$ : sets of potential change-points of each nodes
    \item $j$: current depth of recursion
    \item $q$: percentile to threshold the measurement likelihood
    \item $l_1$: threshold of failure propagation likelihood
\end{itemize}
 1. $j = 1$\\
 2. $J\leftarrow$ compute potential change-points using \texttt{Thinning} ($j$, $m$, $r$, $\tau$, $q$, $l_1$, $\{X_{i,t}\}$)\\
 \For{$x = 1:N$}{
       \For{$t$ in $J(x)$}{
            $p_1 = x$, $\tau_1 = t$\\
            $\ell, r, \tau$ = \texttt{genNext}($j+1, m, r,  \tau$, $\{X_{i,t}\}$)\\
            \If{$\ell>\ell_{\rm max}$}{
			$\ell_{\rm max} = \ell$, $r_{\rm max} = r$, $\tau_{\rm max} = \tau$}
            
            }
       }
 4. \textbf{Return} ($\ell_{\rm max}, r_{\rm max}, \tau_{\rm max}$)\\
 \caption{Cascading change-point detection}
 \label{alg}
\end{algorithm}
}
\end{figure}
\begin{figure}
\resizebox{0.88\columnwidth}{!}{
\begin{algorithm}[H]
\SetAlgoLined
{\bf Input}: $j$, $m$, $r$, $\tau$, Data $\{X_{i,t}\}$ \\
Variable $K$: a set of potential nodes with change-point\\
  \If{$j>m$}{
   $\ell\leftarrow$ compute log-likelihood \\
		return($\ell, r, \tau$)
    }
    $J\leftarrow$ \texttt{Thinning}($j$, $m$, $r$, $\tau$, $q$, $l_1$, Data)\\
	$K\leftarrow$ sample nodes with the probability as (\ref{eq:sampPro})\\ 
	\For{$x$ in $K$}{
	\For{$t$ in $J(x)$}{
		$r_{j} = x$, $\tau_{j} = t$,\\
		$\ell,r, \tau$ = \texttt{genNext}($j+1$, $m$, $r$, $\tau$, $\{X_{i,t}\}$)\\
		\If{$\ell > \ell_{\rm max}$}{
			$\ell_{\rm max} = \ell$,
			$r_{\rm max} = r$,
			$\tau_{\rm max} = \tau$}
	}
	}
	{\bf Return} ($\ell_{\rm max}, r_{\rm max}, \tau_{\rm max}$)
	 \caption{\texttt{genNext} function}
	 \label{alg:genNext}
\end{algorithm}}
\end{figure}

\begin{figure}
\resizebox{0.88\columnwidth}{!}{
\begin{algorithm}[H]
\SetAlgoLined
{\bf Input}: $j$, $m$, $r$, $\tau$, $q$, $l_1$, Data $\{X_{i,t}\}$\\
1. $J= \emptyset $ \\
2. compute $l_{2,x,q}$, $\forall x = 1,\dots, N$ using given Data.\\
\For{$x = 1:N\setminus r$}{
    \For{$t = \tau_j:T$}{
    
    Given $t$ and $\tau$, compute $\mathcal L_{2,x,T}$ and $\mathcal L_{1,T}$.\\
        \If{$\mathcal L_{2,x,T}>l_{2,x,q}$}{
        $J(x) = J(x) \cup t$}
        \If{$\mathcal L_{1,T}<l_1$}{
        exit for    }
    }
}

4.	{\bf Return} ($J$)
	 \caption{\texttt{Thinning} function}
	 \label{thinning}
\end{algorithm}}
\end{figure}

\section{Numerical Examples}\label{sec:numExp}

In this section, we perform several numerical examples to demonstrate our proposed method's performance and compare it to existing methods. We consider two commonly used performance metrics in change-point detection: the average run length (ARL) (a large ARL means a low false alarm rate) and the expected detection delay (EDD). More specifically, for a stopping time $\Gamma$, we define ARL as $\mathbb E_0[\Gamma]$, and we use  $\mathbb E_1[(\Gamma - \tau_{(1)}^*)^+]$ as a measure for EDD (which is a common practice in literature (see, e.g., \cite{xie2013sequential}), where $\mathbb E_i$ denotes the expectation with the probability measure under hypothesis $H_i$. As we increase the threshold, ARL typically increases exponentially, whereas the EDD will increase linearly. A good change-point detection procedure should have a small EDD, given the same ARL.

We consider two case studies: one is to detect the very first change in the network, and the other is to detect the change when there are at least $\eta$ change-points. In Case I, we  compare our methods with generalized likelihood ratio (GLR) and (CuSum), since CuSum is the optimal procedure when the parameter is known \cite{lorden1971procedures}) and GLR is a natural generalization of CuSum when the post-change parameter is unknown \cite{lai2001sequential}. In Case II, we compare our method with the state-of-the-art, including the S-CuSum \cite{zou2018quickest} and a traditional method, Generalized Multi-char CuSum, both of which are suitable for Case II. Below the pre-change distribution is $\mathcal N(0,1)$ and post-change distribution is $\mathcal N(1,1)$.

\vspace{.1in}
\noindent
{\it{Case I}: Detect the first change-point.} In Figure \ref{fig:exp} (left), we show the results in a 300-bus power system (see MATPOWER \cite{zimmerman2010matpower}). In this relatively large system, we apply our algorithm with $L = 100$, $q=0.8$, $P = 1$, $l_1 = e^{-5}$, and $m=5$. 
Our detection statistic can be computed quite efficiently: the average computation time for each time step is less than 3 seconds. Dashed lines are the results when parameters are known. In this scenario, we compare our proposed method with the exact CuSum and two misspecified CuSum ($\mu = 2, 2.5$). Solid lines are the results when parameters are unknown. In this scenario, we compare our proposed method with GLR.
Overall, our method shows the best performance, which is reasonable because our method is not only based on the likelihood ratio but also considers the likelihood of failure propagation.

\vspace{.1in}
\noindent
{\it{Case II}: Detect when there are at least $\eta$ change-points.} Here we compare our proposed method with generalized multi-chart CuSum, and S-CuSum\cite{zou2018quickest}, because these are the most well-known algorithms for tackling such problems. In this experiment, the graph is fully connected with 15 nodes. 
The parameters for the algorithm are $L = 100$, $q = 0.8$, $P=1$, $l_1 = e^{-7}$, and $m = 5$. We set $\eta = 3$. To compute the ARL, we generate data with  $\eta - 1$ affected nodes. 
The result in Figure \ref{fig:exp} (right) shows that our method outperforms both generalized multi-chart CuSum and S-CuSum.

\begin{figure}[t]
	\centering
    	\includegraphics[width = .45 \linewidth]{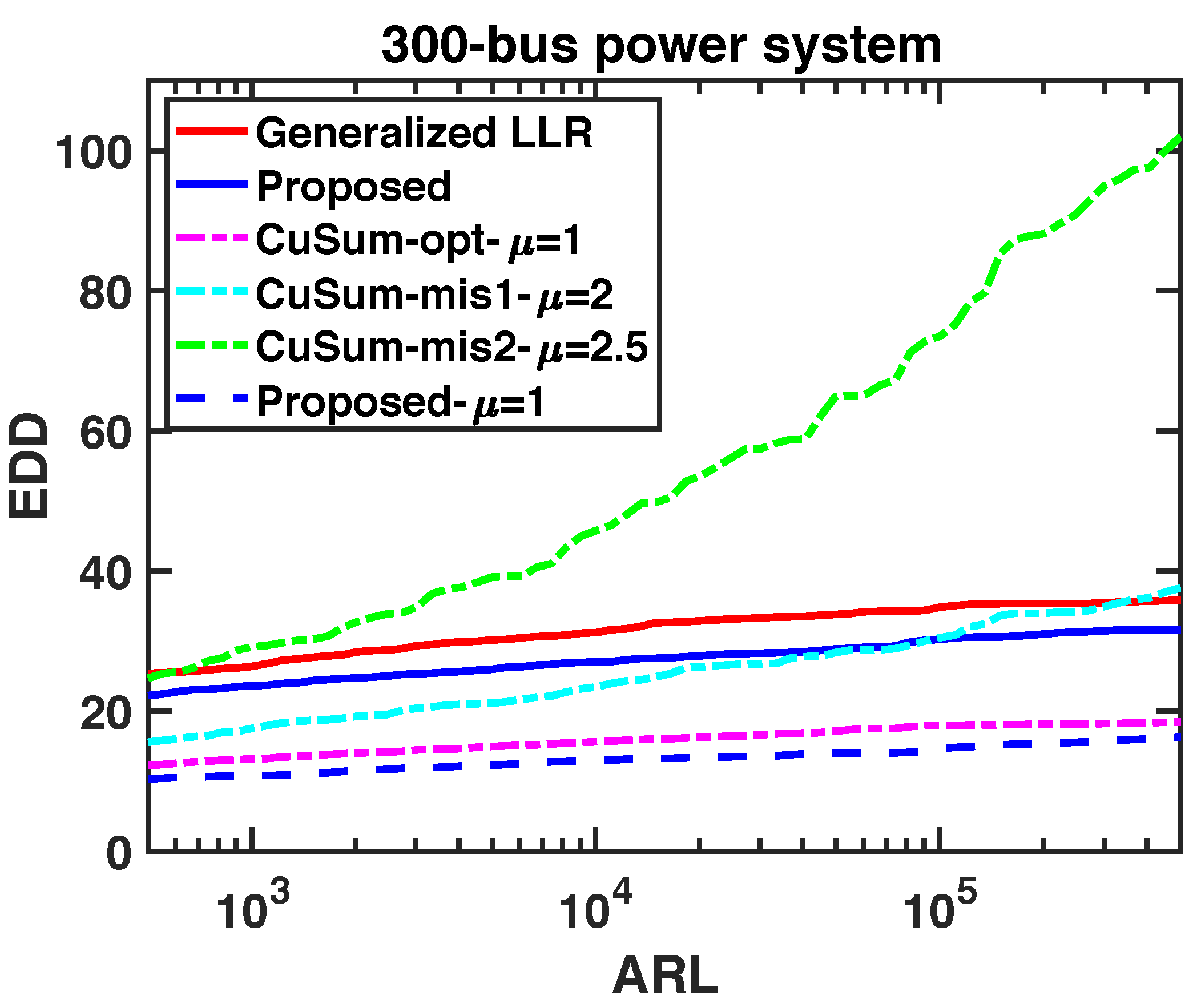}
		\includegraphics[width = .45 \linewidth]{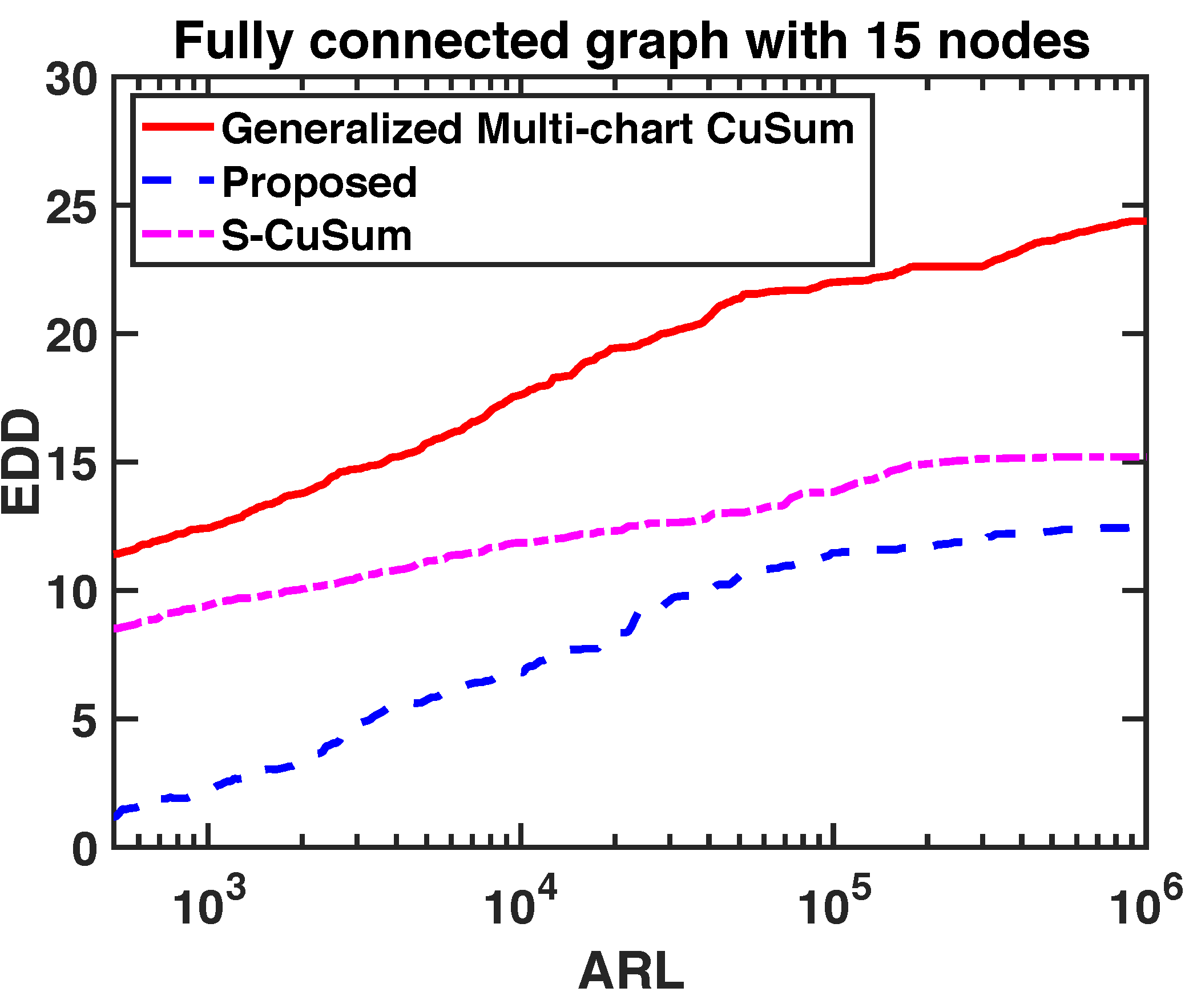}
	\caption{(Left) Comparison of CuSum, generalized likelihood ratio, and the proposed method. (Right) Comparison of generalized multi-chart CuSum, S-CuSum, and the proposed method.}
\label{fig:exp}
	\vspace{-0.2in}
\end{figure}

\section{Conclusion}

We proposed a computationally efficient algorithm to perform the change-point detection by modeling the cascading failure as a temporal diffusion process in a network. Numerical experiments show that our proposed method demonstrates good performance; for  an IEEE 300-bus system, which is considered relatively large,  our results show that the proposed algorithm can scale up to larger systems.

\section*{Acknowledgement}

The work of Rui Zhang and Yao Xie is supported by an NSF CAREER Award CCF-1650913, and NSF CMMI-2015787, DMS-1938106, DMS-1830210.


\newpage


\bibliographystyle{IEEEtran}
\bibliography{References}

\end{document}